\documentclass[]{article} 
\usepackage{mathtools}
\usepackage{float}
\usepackage[utf8]{inputenc}
\usepackage{verbatim}
\usepackage{smartdiagram}
\usepackage{todonotes}
\usepackage{caption}
\usepackage{amsthm}

\usepackage{graphicx}
\usepackage{multicol}
\usepackage{footmisc}
\usepackage{cite}
\usepackage{url}

\usepackage{amsmath, amssymb, amsfonts}

\usepackage[linesnumbered, boxed, lined, ruled,vlined]{algorithm2e}

\newtheorem{definition}{Definition}[section]
\newtheorem{theorem}[definition]{Theorem}
\newtheorem{lemma}[definition]{Lemma}
\newtheorem{corollary}[definition]{Corollary}

\newtheorem{example}[definition]{Example}

\newcommand{\N}{\mathbb{N}}

\newcommand{\R}{\mathbb{R}}

\setlength\parindent{0pt}
\parskip1ex plus0.5ex minus0.2ex

\title{On the Connection between Individual Scaled Vickrey Payments and the Egalitarian Allocation}

\author{N. Gräf \and T. Heller\footnote{\texttt{till.heller@itwm.fraunhofer.de}, corresponding author} \and S.O. Krumke}

\begin{document}         

	\maketitle              
	
	\begin{abstract}
The Egalitarian Allocation (EA) is a well-known profit sharing method for cooperative games which attempts to distribute profit among participants in a most equal way while respecting the individual contributions to the obtained profit. Despite having desirable properties from the viewpoint of game theory like being contained in the core, the EA is in general hard to compute. Another well-known method is given by Vickrey Payments (VP). Again, the VP have desirable properties like coalitional rationality, the VP do not fulfill budget balance in general and, thus, are not contained in the core in general. One attempt to overcome this shortcoming is to scale down the VP. This can be done by a unique scaling factor, or, by individual scaling factors. Now, the individual scaled Vickrey Payments (ISV) are computed by maximizing the scaling factors lexicographically. In this paper we show that the ISV payments are in fact identical to a weighted EA, thus exhibiting an interesting connection between EA and VP. With this, we conclude the uniqueness of the ISV payments and provide a polynomial time algorithm for computing a special weighted EA. 
\end{abstract}

\section{Introduction}
Cooperative game theory deals with the question of how a cooperation of participants can be achieved by incentives such that at the same time a possible turning away from the cooperation is seen as not appealing. For this, the question of how a jointly achieved profit (or cost) is to be distributed is central. It is important to consider both individual contributions to profit and the gains achieved by a group. 

Well established fairness properties are the following three. First of all, \emph{individual rationality (IR)} ensures that a participant cannot incur a loss by participating. The \emph{coalitional rationality (CR)} is a natural extension to a group of participants, i.e. a group of participants is equal or better off by participating. The third one ensures that the profit shared should be covered by the generated profit and is called \emph{budget balance(BB)}. 

A profit distribution that fulfills these desired properties is the \emph{egalitarian allocation (EA)} which was first introduced by Dutta and Ray (cf. \cite{dutta1989concept}). The egalitarian allocation shares the profit generated by a coalition of participants, i.e. a group of participants, in an equal way, starting from the coalition with the highest generated profit within themselves. 

Koster \cite{koster1999weighted} extended the egalitarian solution to asymmetric cases by introducing weights to all participants. Both the uniqueness and existence results from~\cite{dutta1989concept} were obtained in the asymmetric case. In order to compute the egalitarian allocation, \cite{koster1999weighted} proposed an algorithm that requires to compute a maximally weighted coalition that maximizes the weighted average contribution in every iteration. In general, this needs an exponential number of evaluations of the characteristic function and, thus, the algorithm has an exponential running time in general. 

Another important and famous solution concept are the \emph{Vickrey payments}, which are based on work from Clarke \cite{clarke1971multipart}, Groves \cite{groves1973incentives} and Vickrey \cite{vickrey1961counterspeculation}. Despite the properties (e.g. IR, CR) of Vickrey payments, a major drawback is that the Vickrey payments do not fulfill BB. In order to get a profit distribution that fulfills the budget balance constraint, the Vickrey payments can be scaled down in several ways. Parkes et al. \cite{parkes2001achieving} introduced the \emph{threshold Vickrey payments}, where discounts are given to the participant only if they exceed some given threshold. Another way is to scale down all Vickrey payments by the same rate, such that the budget balance constraint is satisfied. This was analyzed also by Parkes et al. \cite{parkes2001achieving} and by Ackermann et al. \cite{ackermann2014modeling}. By scaling down the Vickrey payments with respect to their relative weights, Ackermann et al. \cite{ackermann2014modeling} showed that they obtained a polynomial time algorithm for a profit distribution that fulfills IR, CR and BB.

The outline of the paper is as follows. In the following section, we introduce the basic notation of cooperative game theory. In Section~\ref{sec: egal: vickrey} we recall the definition of Vickrey payments and using examples to show basic properties and shortcomings whereas we present a short introduction to the egalitarian allocation in Section~\ref{sec: egal: egalitarian}. Our main contribution is to prove the connection between the weighted egalitarian allocation (WEA) and the individually scaled Vickrey payments (ISV), which is presented in Section~\ref{sec: egal: connection}. We then conclude with a short outlook.

\section{Preliminaries and basic examples}
First, we start with the basic notation of cooperative game theory. A \emph{cooperative game with transferable utility} (also denoted by \emph{TU-game}) is a tuple~$(N,v)$, where $N$ denotes a set of \emph{participants} and $v: 2^N\rightarrow \R$ denotes the \emph{characteristic function}. The characteristic function~$v$ assigns every subset of participants a value in $\R$. A (sub)set of participants is called a \emph{coalition} whereas we refer to the coalition consisting of all participants as the \emph{grand coalition} $N$.  
A \emph{payoff vector} is a vector in $\R^{|N|}$ whose $i$-th entry denotes the payment to participant~$i$. If a payment vector~$p$ is \emph{feasible}, i.e. $\sum_{i\in N} p(i) = v(N)$, and \emph{individually rational}, i.e. $p(i)\geq v(\{i\})$ for all~$i\in N$, we call $p$ an \emph{imputation}. The \emph{core} is defined as the set of all imputations that furthermore fulfill \emph{coalitional rationality}, i.e. $\sum_{i\in S} x_i \geq v(S)$.

\subsection{Some Notes on Vickrey Payments}\label{sec: egal: vickrey}
The \emph{Vickrey payments} assign each participant a payoff equal to the value generated by the participation of the respective participant. For this let $N$ denote the set of participants and $v(N)$ be the evaluation of the grand coalition~$N$ of the characteristic function~$v$. Then the Vickrey payment~$vp_i$ of a participant~$i$ is given by
\begin{align*}
vp_i \coloneqq v(N) - v(N\backslash \{i\}).
\end{align*}

Ausubel et al. \cite{ausubel2002ascending} gave a direct connection between the core elements and the Vickrey payments.
\begin{theorem}[cf. \cite{ausubel2002ascending}]\label{thm: vickrey core}
	The Vickrey payment of a participant is equal to the participant's highest payoff over all core elements. $\Box$
\end{theorem}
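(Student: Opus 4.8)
\emph{Proof idea.} The asserted equality is equivalent to the two inequalities $\max\{x_i : x\in C\}\le vp_i$ and $\max\{x_i : x\in C\}\ge vp_i$, where $C$ denotes the (assumed nonempty) core. The first of these is the routine direction: given any $x\in C$, feasibility gives $\sum_{j\in N}x_j=v(N)$, while coalitional rationality applied to the coalition $N\setminus\{i\}$ gives $\sum_{j\neq i}x_j\ge v(N\setminus\{i\})$; subtracting the second relation from the first yields $x_i\le v(N)-v(N\setminus\{i\})=vp_i$. Since $x\in C$ was arbitrary, $\max\{x_i:x\in C\}\le vp_i$.

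For the reverse inequality one has to produce a single core element that attains $vp_i$ in its $i$-th coordinate; looking at the chain of inequalities above, this is the same as producing a core element at which the constraint of the coalition $N\setminus\{i\}$ is tight. The plan is to build such an element from the subgame on $N\setminus\{i\}$: pick a core allocation $y$ of $(N\setminus\{i\},v)$ and set $x^{\star}_i\coloneqq vp_i$ and $x^{\star}_j\coloneqq y_j$ for $j\neq i$. Feasibility is then automatic, $\sum_{j\in N}x^{\star}_j=vp_i+v(N\setminus\{i\})=v(N)$, and for coalitions $S$ with $i\notin S$ the core inequality holds because $y$ lies in the core of the subgame; the only inequalities left to verify are those for coalitions $S\ni i$, and they follow once $v(N)-v(N\setminus\{i\})\ge v(S)-v(S\setminus\{i\})$ for every such $S$, i.e.\ once $i$'s marginal contribution is maximised at the grand coalition. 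In the presence of convexity of $v$ one can shortcut the whole argument: by Shapley's theorem every marginal vector of a convex game lies in the core, so the marginal vector associated with any ordering that places $i$ last is itself a core element whose $i$-th coordinate equals exactly $v(N)-v(N\setminus\{i\})=vp_i$. One can also phrase the argument purely via linear programming duality: $\max\{x_i:x\in C\}$ is an LP whose dual has the obvious feasible solution assigning weight $1$ to the feasibility constraint and weight $1$ to the coalition $N\setminus\{i\}$, of objective value $vp_i$, and the content of the theorem is that this dual solution is optimal.

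The step I expect to be the crux is precisely this last verification, i.e.\ exhibiting a core element tight on $N\setminus\{i\}$. For a completely arbitrary characteristic function the core may be empty, or nonempty but contain no element tight on $N\setminus\{i\}$, in which case the reverse inequality can be strict and the stated identity can fail. What makes Theorem~\ref{thm: vickrey core} true is the structural hypothesis under which Ausubel et al.\ work --- a convexity- / ``agents-are-substitutes''-type condition --- which both guarantees nonemptiness of the core and guarantees a core element attaining $vp_i$ for each $i$ (and, more strongly, that the core is a lattice with these coordinatewise maxima as a common upper corner). So the real work in invoking the theorem in what follows is the (implicit) claim that the games considered here meet that hypothesis; granting it, the construction above --- or the duality certificate --- delivers the equality.
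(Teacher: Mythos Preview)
The paper does not supply a proof of this theorem: the statement is quoted from \cite{ausubel2002ascending} and immediately closed with a $\Box$, so there is no in-paper argument to compare yours against.

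That said, your proposal is correct and appropriately cautious. The bound $x_i\le vp_i$ for every core element $x$ is the routine two-line computation you give. For the reverse direction, your convex-game shortcut via Shapley's theorem --- any marginal vector of a convex game lies in the core, and the marginal vector for an ordering placing $i$ last has $i$-th coordinate exactly $v(N)-v(N\setminus\{i\})=vp_i$ --- is precisely what is needed in this paper, since the only later use of Theorem~\ref{thm: vickrey core} is inside the proof of the main result, which explicitly assumes $(N,v)$ convex. Your warning that the identity can fail for an arbitrary TU-game, and that Ausubel--Milgrom's original statement sits under a substitutes-type hypothesis, is well placed.

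One minor remark on your subgame construction: after bounding $\sum_{j\in S\setminus\{i\}}y_j\ge v(S\setminus\{i\})$ for $S\ni i$, what you need to close the argument is $vp_i+v(S\setminus\{i\})\ge v(S)$, i.e.\ $v(N)-v(N\setminus\{i\})\ge v(S)-v(S\setminus\{i\})$. This is exactly the increasing-marginal-contributions characterisation of convexity, so the argument closes cleanly under that hypothesis; but it also tacitly assumes the subgame $(N\setminus\{i\},v)$ has a nonempty core, which again follows from convexity (subgames of convex games are convex) but not in general. Since you already have the one-line Shapley argument available in the convex case, the subgame detour is not needed here.
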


Unfortunately, the payoff vector consisting of the participants' Vickrey payments does not lie in the core in general as the following example shows.
\begin{example}[Violation of BB]
	We define a cooperative game depending on the setting of a combinatorial exchange as follows. The set of participants is given by the union of a set of supply bids and a set of demand bids for a kind of item. The characteristic function now maps each coalition of bids to the obtained profit, i.e. the difference between the demand offer and the supply offer, in said coalition. Suppose we are given an item~$I$, one supply bid~$s_1$ that offers item~$I$ for a price of $10$, and two demand bids~$d_1, d_2$ that are willing to pay $12$, resp. $15$. The total auction profit is therefore $5$, but the Vickrey payments are $vp_{s_1} = 5$, $vp_{d_1} = 5 - 2 = 3$ and $vp_{d_2} = 0$. Thus, the sum of all Vickrey payments is higher than the total auction profit.
\end{example}

Parkes et al. \cite{parkes2001achieving} came up with several strategies to regain budget balance. For example, the \emph{threshold Vickrey payments} are essentially the Vickrey payments, but only consider the payments that are larger than a given threshold. By choosing a suitable threshold, one can enforce that not more than the budget is spent on payments. Another technique they presented is the (equally) scaling approach, where all the Vickrey payments are scaled by a factor in order to regain budget balance. For this let $\alpha$ denote the scaling factor, $P$ as the budget and $vp_i$ denote the Vickrey payment for participant~$i$. Then the \emph{(equally) scaled Vickrey payment}~$vp_i^{eq}$ (ESV) of participant~$i$ can be computed by solving the following mathematical program.
\begin{align*}
\max \quad & \alpha\\
\text{s.t.} \quad & vp_i^{eq} = \alpha \cdot vp_i \qquad \forall i\in N \\
& \sum_{i\in N} vp_i^{eq} = P\\
& \alpha \in [0, 1].
\end{align*}

Unfortunately, the next example shows that such a payoff vector does not always fulfill coalitional rationality, i.e. does not always lie in the core.

\begin{example}[Violation of CR]
	We extend the example above by adding another supply bid~$s_2$ with a price of $12$ and another demand bid~$d_3$ with price~$13$. Now the value of the grand coalition is given by $v(N) = (15-10) + (13-12) = 6$. The Vickrey payments now are given by $vp_{s_1} = 3 = vp_{d_2}$, $vp_{s_2} = 1 = vp_{d_3}$ and $vp_{d_1} = 0$. As they sum up to $8$, the scaling factor is $\alpha= \frac{6}{8}$. But now the coalition $\{s_1, d_2\}$ could obtain a profit of $5$ on their own, but is only rewarded by $3\cdot \frac{6}{8} \cdot 2 = \frac{9}{2} < 5$ and thus does not satisfy CR. 
\end{example}

Ackermann et al. showed in \cite{ackermann2014modeling} that scaling the Vickrey payments on an individual basis keeps coalitional rationality in their setting. More precisely, they introduced a scaling factor for each entry of the payment vector and choose the lexicographical maximal scaling vector. Using the same notation as before, the \emph{individually scaled Vickrey payments}~$vp^{ind}$ (ISV) can be computed by solving the following mathematical program where we maximize lexicographically, i.e. the smallest entry of the vector~$\alpha$ is maximized.
\begin{align}
\max\phantom{}^{\prec} \quad & \alpha \label{eq: obj func isv}\\
\text{s.t.} \quad & vp_i^{ind} = \alpha_i \cdot vp_i \qquad \forall i\in N \notag \\
& \sum_{i\in N} vp_i^{ind} = P \notag\\
& \alpha_i \in [0, 1] \qquad \forall i\in N. \notag
\end{align}

\subsection{Some Notes on the Egalitarian Allocation}\label{sec: egal: egalitarian}
Before we state the definition of the \emph{(weighted) egalitarian allocation}, we introduce some notation. A permutation, that orders the entries $x^w_i\coloneqq \frac{x_i}{w_i}$ of an entry-wise scaled vector~$x^w$ with a weight vector~$w$, is denoted by~$\sigma_x^w$. First, we recall some basic definitions regarding the egalitarian allocation:
\begin{definition}[weighted Lorenz-curve, Lorenz-domination]
	Given a vector~$x$ in $\R^n$, weights~$w$ in $\R^n_{> 0}$ and $W\coloneqq \sum_{i=1}^n w_i$. Let $\sigma^w_x:\N \rightarrow \N$ be a permutation such that 
	\begin{align*}
	\frac{x_{\sigma_x^w(i)}}{w_{\sigma_x^w(i)}} \leq \frac{x_{\sigma_x^w(i+1)}}{w_{\sigma_x^w(i+1)}}
	\end{align*}
	holds for all $i\in \{1,\dots, n-1\}$. The \emph{$w$-Lorenz-curve} (or \emph{$L^w$-curve}) for the vector~$x$ is given by the piece-wise linear function $L^w_x: [0, W]\mapsto \R$ with 
	\begin{align*}
	L^w_x(0) = 0, \qquad \text{and } L^w_x\left(\sum_{k=1}^i w_{\sigma_x^w(k)}\right) = \sum_{k=1}^i x_{\sigma_x^w(k)} 
	\end{align*}
	for all $i\in\{1,\dots, n-1\}$ such that $L^w_x$ is linear on each interval of the form $[0, w_{\sigma^w_x(1)}]$, $(\sum_{k=1}^i w_{\sigma^w_x(k)}, \sum_{k=1}^{i+1} w_{\sigma^w_x(k)}]$ for all $i\in\{1,\dots, n-1\}$. 
	
	For two vectors~$x,y$ in $\R^n$ we say \emph{$x$ $w$-Lorenz-dominates y} (or \emph{$x$ $L^w$-dominates y}) if $L^w_x(p) \geq L^w_y(p)$ for all $p\in [0,W]$ and strict inequality for at least one~$p\in [0,W]$.
\end{definition}

Note that we obtain a more tractable form of the $L^w$-curve for a given vector~$x$, $p\in[0,W]$ and $k\in \{1,\dots, n\}$. For $p\in [0, w_{\sigma^w_x(1)}]$, 
\begin{align}
L^w_x(p)&= p\frac{x_{\sigma_x^w(1)}}{w_{\sigma_x^w(1)}}
\intertext{and for $p\in(\sum_{k=1}^i w_{\sigma^w_x(k)}, \sum_{k=1}^{i+1} w_{\sigma^w_x(k)}]$, $i\in\{1,\dots, n-1\}$,}
L^w_x(p) = \sum_{k=1}^i x_{\sigma_x^w(k)} &+ \left(p-\sum_{k=1}^{i} w_{\sigma_x^w(k)}\right)\cdot \frac{x_{\sigma_x^w(k+1)}}{w_{\sigma_x^w(k+1)}}.\label{eq: lorenz interval k}
\end{align}

\begin{definition}[weighted egalitarian allocation, Lorenz-core]
	For a set~$M\in\R^n$ for $n\in\N$ and weights~$w\in\R^n_{>0}$, we define the set of vectors in $M$ that are $w$-Lorenz-undominated within $M$ as
	\begin{align*}
	EA^w(M) \coloneqq \{x\in M : \not\exists y\in M, y\neq x, \text{ s.t. } y \; L^w\text{-dominates } x\}. 
	\end{align*}
	The \emph{Lorenz-core}~$L(S)$ of a coalition~$S$ is now defined recursively. First, the Lorenz-core of a singleton coalition~$\{i\}$ is given by $L(\{i\}) = \{v(i)\}$. Then, for a coalition~$S$, the Lorenz-core is defined as
	\begin{align*}
	L(S) \coloneqq \{x\in\R^{|S|} : \sum_{i\in S} x_i = v(S), \not\exists T\subseteq S \text{ and } y\in EA^w(L(T)) \text{ s.t. } y > x_{|T}\}, 
	\end{align*}
	where $x_{|T}$ denotes the projection of $x$ on $T$ and the comparison is done entry-wise. Finally, the set of \emph{weighted egalitarian allocations} is denoted by $EA^w(L(N))$. 
\end{definition}

Koster \cite{koster1999weighted} adapted an algorithm of Dutta and Ray \cite{dutta1989concept} to compute the weighted egalitarian allocation vector. Algorithm~\ref{alg: egalitarian allocation} iteratively computes a partition of the participant set~$N$ into coalitions and thus terminates after at most $|N|$ steps. In each step a coalition~$S$ with largest average value~$a(S,v) = \frac{v(S)}{w(S)}$ has to be found, which can be computationally hard.

\begin{algorithm}
	\caption{Compute Egalitarian Allocation (cf. \cite{dutta1989concept, koster1999weighted})}\label{alg: egalitarian allocation}
	Input: A TU-game~$G=(N,v)$, weights~$w\in\R^{|N|}_{>0}$\\
	Initialization: $v_1 = v$, $N_1 \leftarrow N$, $x\leftarrow 0$, $i\leftarrow 1$\\
	\While{$N\neq\emptyset$}{
		Compute coalition~$S_i\subseteq N$ with maximal weighted average value~$a_w(S_i, v_i) = \frac{v_i(S)}{w(S)}$. \\
		\ForEach{$p\in S_i$}{$x_p = w_pa_w(S_i,v_i)$.}
		$N \leftarrow N\backslash S_i$ \\
		Define $v_{i+1}$ by $v_{i+1}(S) \leftarrow v_i(S\cup S_i) - v_i(S_i)$ for all $S\subseteq N$.
	}
	Return $x$.
\end{algorithm}

A slightly different concept of the Lorenz-core was introduced e.g. by Arin et al. \cite{arin1997egalitarianism}, where the Lorenz-core is defined as the Lorenz-undominated payoff vectors in the core. Clearly, by restricting to core elements only, they also restrict the possible egalitarian allocations. The next theorem is due to Koster \cite{koster1999weighted} and a generalization of a result by Dutta et al. \cite{dutta1989concept}, which shows that the restriction to core elements has no effect for convex games (or, more general, balanced games) since the core of such games is always non-empty. 

\begin{theorem}[cf. \cite{dutta1989concept, koster1999weighted}]\label{thm: Lorenz dominates core}
	Let~$(N,v)$ be a convex game and $w$ a vector of positive weights in $\R^n_{>0}$. It holds true that there exists a unique weighted egalitarian allocation~$y^*$ with weights~$w$. Furthermore, $y^*$ is a core allocation and $w$-Lorenz-dominates every other core allocation. 
\end{theorem}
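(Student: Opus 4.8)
The plan is to analyse the vector $y^{*}$ returned by Algorithm~\ref{alg: egalitarian allocation}: I would show that for a convex game it is well defined, that it is a core allocation, and that it $w$-Lorenz-dominates every other core allocation; uniqueness of the weighted egalitarian allocation then drops out, since two vectors that $w$-Lorenz-dominate each other must coincide. Well-definedness rests on the fact that in a convex (supermodular) game the coalitions attaining the maximum weighted average $a_{w}(\cdot ,v_{i})$ are closed under union: if $a_{w}(A,v_{i})=a_{w}(B,v_{i})=\alpha$ is the maximum, then $v_{i}(T)\le \alpha w(T)$ for every $T$, and supermodularity gives $\bigl(v_{i}(A\cup B)-\alpha w(A\cup B)\bigr)+\bigl(v_{i}(A\cap B)-\alpha w(A\cap B)\bigr)\ge 0$, so both bracketed terms vanish and $A\cup B$ is again a maximiser. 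Hence in each step there is a unique inclusion-maximal maximiser $S_{i}$, the partition $S_{1},\dots ,S_{m}$ and the values $\alpha_{k}=a_{w}(S_{k},v_{k})$ are canonically determined, and a short computation shows the reduced game $v_{i+1}$ is again convex so the recursion is well posed. I would also record here that $\alpha_{1}>\alpha_{2}>\dots >\alpha_{m}$: if $\alpha_{k+1}\ge \alpha_{k}$, then $v_{k}(S_{k}\cup S_{k+1})=v_{k}(S_{k})+v_{k+1}(S_{k+1})\ge \alpha_{k}\,w(S_{k}\cup S_{k+1})$, contradicting the maximality of $S_{k}$ in step $k$.

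Next I would check that $y^{*}$ is a core allocation. Feasibility $\sum_{p\in N}y^{*}_{p}=v(N)$ is the telescoping identity $\sum_{k=1}^{m}v_{k}(S_{k})=v(N)$. For coalitional rationality I would argue by induction on $|N|$, peeling off $S_{1}$: for $T\subseteq N$ and $T_{1}=T\cap S_{1}$ one has $\sum_{p\in T}y^{*}_{p}\ge \alpha_{1}w(T_{1})+v_{2}(T\setminus S_{1})$ by the inductive core property of the convex reduced game $v_{2}$, and combining this with supermodularity ($v(T)+v(S_{1})\le v(T\cup S_{1})+v(T_{1})$) and with $v(T_{1})\le \alpha_{1}w(T_{1})$ yields $\sum_{p\in T}y^{*}_{p}\ge v(T)$; individual rationality is the case $T=\{p\}$.

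The heart of the proof is the Lorenz-domination claim, and this is the step I expect to be the main obstacle. Since $L^{w}_{y^{*}}$ is concave and piecewise linear, it is enough to compare the two curves at the breakpoints of $L^{w}_{z}$, i.e. to show $\sum_{p\in B}z_{p}\le L^{w}_{y^{*}}(w(B))$ for every lower set $B$ of $z$. Writing $R_{k}=S_{1}\cup\dots\cup S_{k}$, the curve $L^{w}_{y^{*}}$ takes the value $v(N)-v(R_{k-1})$ at $w\bigl(N\setminus R_{k-1}\bigr)$ and has slope $\alpha_{k}$ on its $k$-th linear piece; using feasibility of $z$ together with coalitional rationality along the whole chain $R_{0}\subset R_{1}\subset\dots\subset R_{m}$ gives $L^{w}_{z}\bigl(w(N\setminus R_{k-1})\bigr)\le v(N)-\sum_{p\in R_{k-1}}z_{p}\le v(N)-v(R_{k-1})$, so the two curves can disagree at most on the interior of the linear pieces of $L^{w}_{y^{*}}$. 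Ruling out such a bulge is the delicate part: I would handle it by the same induction on $|N|$, relating $z|_{N\setminus S_{1}}$ to the core of the convex reduced game $v_{2}$ — when $\sum_{p\in S_{1}}z_{p}=v(S_{1})$ the restriction lies in that core exactly and the inductive Lorenz-domination applies, while the general case is treated by carrying the surplus $\delta=\sum_{p\in S_{1}}z_{p}-v(S_{1})\ge 0$ through the argument — together with the fact that $S_{1}$ carries the \emph{maximal} ratio $\alpha_{1}$, which fixes the $S_{1}$-coordinates of $y^{*}$ at the top of every relevant ordering. The genuinely awkward point is that the sorted order of $z$ interleaves the blocks $S_{1},\dots,S_{m}$ arbitrarily, so the inductive comparison on $N\setminus S_{1}$ must be stacked onto the flat block on $S_{1}$ without any assumed compatibility of orderings; the strict inequalities $\alpha_{1}>\dots>\alpha_{m}$ from the first step are exactly what makes this stacking go through, and they also deliver strictness of the domination whenever $z\neq y^{*}$ (equality of the curves throughout would force $z$ to agree with $y^{*}$ on every block).

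Finally, uniqueness: if $y^{(1)},y^{(2)}\in EA^{w}(L(N))$ were distinct, then by the domination result each would $w$-Lorenz-dominate the other, which is impossible; hence $EA^{w}(L(N))=\{y^{*}\}$, and since for a convex game the Lorenz-core $L(N)$ coincides with the (non-empty) core, $y^{*}$ is the announced core allocation that $w$-Lorenz-dominates all others.
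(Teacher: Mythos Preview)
The paper does not prove this theorem at all: it is quoted from the literature (Dutta--Ray and Koster), stated with the tag ``cf.\ \cite{dutta1989concept, koster1999weighted}'' and used as a black box in the subsequent arguments. There is therefore no ``paper's own proof'' to compare against.

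That said, your sketch is precisely the route taken in those references: analyse the output of Algorithm~\ref{alg: egalitarian allocation}, use supermodularity to get union-closure of the maximisers (hence a canonical inclusion-maximal $S_{i}$ at each step and strict monotonicity $\alpha_{1}>\dots>\alpha_{m}$), verify core membership by peeling off $S_{1}$ and inducting on the reduced convex game, and then establish $w$-Lorenz-domination over the core. Your assessment that the Lorenz-domination step is the delicate one is accurate, and your plan for it (compare at the breakpoints, use the chain $R_{0}\subset\dots\subset R_{m}$ and induct via the reduced game while carrying the surplus $\delta$) is the right shape; you are also right that the interleaving of the blocks in the sorted order of $z$ is where the work lies. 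One small point to tighten in a full write-up: your final clause ``for a convex game the Lorenz-core $L(N)$ coincides with the core'' is not immediate from the recursive definition given in the paper and deserves its own argument (or you can sidestep it by showing directly that the algorithmic $y^{*}$ is $w$-Lorenz-undominated in $L(N)$, which is how Dutta--Ray close the loop).
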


\begin{example}
	Let the characteristic function~$v$ be given by $v(\{i\}) = 0$ for $i=1,2,3$, and $v(1,2) = 5$, and $v(1,3) = 0$, and $v(2,3) = 2$ and $v(1,2,3) = 7$. Then the Vickrey payments are given by $vp_1 = 5$, $vp_2 = 7$ and $vp_3 = 2$. Since the sum exceeds the available budget, we consider the (equally, individually) scaled Vickrey payments. These are given by $svp_1 = 2.5$, $svp_2 = 3.5$ and $svp_3 = 1$. In this case the equally and individually scaled Vickrey payments coincide. 
	
	Now we consider the egalitarian allocation with unit weights first. Thus, the coalition with the highest average value is $(1,2)$ and thus $ea_1 = ea_2 = 2.5$. With the updated characteristic function as described in Algorithm~\ref{alg: egalitarian allocation}, we get $ea_3 = 2$. 
	
	If we consider the Vickrey payments as weights for the egalitarian allocation, we see that the grand coalition~$(1,2,3)$ yields the highest average value~$1/2$. Thus, the weighted egalitarian allocation is given by $wea_1 = 2.5$, $wea_2 = 3.5$ and $wea_3 = 1$. 
\end{example}

\section{The Connection Between the ISV-Payments and the Egalitarian Allocation}\label{sec: egal: connection}
In this section we show that the weighted egalitarian allocation for specific weights and the ISV-payments coincide, i.e. that the ISV-payments can be seen as a special case of the weighted egalitarian allocation. Given two vectors~$x,y\in\R^n$ and weights~$w\in\R^n_{>0}$, the following two lemmas give us useful implications if $x$ is $L^w$-dominated by $y$.
\begin{lemma}\label{lem: Lorenz dominates one entry strictly bigger}
	Let $x,y$ be two vectors in~$\R^n$ and let $w$ denote weights in $\R^n_{>0}$. If~$x$ is $L^w$-dominated by~$y$, then there exists an index~$j\in \{1,\dots, n\}$ such that~$y_j>x_j$ holds true.  	
\end{lemma}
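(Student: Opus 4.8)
The plan is to argue by contradiction, exploiting the fact that the right endpoint of any $w$-Lorenz-curve records the total of all coordinates. Concretely, for any vector $z\in\R^n$ one has $L^w_z(W)=\sum_{k=1}^n z_{\sigma^w_z(k)}=\sum_{i=1}^n z_i$, since the permutation merely reorders the summands. Hence the first step is to evaluate the domination inequality $L^w_y(p)\ge L^w_x(p)$ at $p=W$, which yields $\sum_{i=1}^n y_i\ge \sum_{i=1}^n x_i$.

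Next, I would suppose towards a contradiction that no index $j$ with $y_j>x_j$ exists, i.e. $y_j\le x_j$ for every $j\in\{1,\dots,n\}$. Summing these $n$ inequalities gives $\sum_{i=1}^n y_i\le\sum_{i=1}^n x_i$. Together with the inequality from the previous step this forces $\sum_{i=1}^n y_i=\sum_{i=1}^n x_i$, and then the componentwise inequalities $y_j\le x_j$ must all hold with equality, so $x=y$.

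Finally, if $x=y$ then any permutation sorting $x/w$ also sorts $y/w$, so $\sigma^w_x$ is a valid choice for $\sigma^w_y$ and the two piecewise-linear functions $L^w_x$ and $L^w_y$ coincide on all of $[0,W]$; in particular $L^w_y(p)=L^w_x(p)$ for every $p$, contradicting the requirement in the definition of $w$-Lorenz-domination that strict inequality hold for at least one $p\in[0,W]$. This contradiction shows that some $j$ with $y_j>x_j$ must exist.

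The statement is short enough that there is no real technical obstacle; the only point that needs care is the first observation — that $L^w_z(W)$ equals the unweighted total $\sum_i z_i$ independently of the ordering permutation — combined with noticing that the componentwise hypothesis $y\le x$ together with equal totals collapses to $x=y$, at which stage the mandatory strictness in the definition of $L^w$-domination is exactly what delivers the contradiction.
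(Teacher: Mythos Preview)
Your proof is correct and takes a genuinely different, more economical route than the paper. The paper assumes $y_j\le x_j$ for all $j$ and then works through the piecewise-linear representation of the Lorenz curve to argue that $L^w_x(p)\ge L^w_y(p)$ holds for \emph{every} $p\in[0,W]$, which contradicts the existence of a point of strict inequality. You instead evaluate the domination inequality only at the single endpoint $p=W$, observe that this forces the coordinate sums to agree, and then use the componentwise hypothesis to collapse $x=y$, at which point the strictness clause in the definition finishes the job. Your argument is shorter and avoids any manipulation of the piecewise formula or of the two (possibly different) sorting permutations; the paper's argument, by contrast, establishes along the way the stronger monotonicity fact that $y\le x$ componentwise implies $L^w_y\le L^w_x$ pointwise, which is of some independent interest but is not needed for the lemma as stated.
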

\begin{proof}
	Suppose for contradiction that $y_j\leq x_j$ holds true for all~$j=1,\dots,n$. Then also the ordered vectors~$x_{\sigma_x^w}$, $y_{\sigma_y^w}$ fulfill $x_{\sigma_x^w(j)} \leq y_{\sigma_y^w(j)}$ for all $j=1,\dots,n$. Since $x$ is $L^w$-dominated by $y$, there exists a $p\in[0,W]$ with $L^w_y(p) > L^w_x(p)$. Now choose $k\in \{1,\dots, n\}$ such that condition~\eqref{eq: lorenz interval k} is fulfilled. Then it holds
	\begin{align*}
	L^w_x(p) &= \sum_{u=1}^{k} x_{\sigma_x^w(u)} + \left(p - \sum_{u=1}^{k}w_{\sigma_x^w(u)}\right)\frac{x_{\sigma_x^w(k+1)}}{w_{\sigma_x^w(k+1)}}
	\intertext{By our assumption, this can be estimated downwards by}
	\geq&\sum_{u=1}^{k} y_{\sigma_y^w(u)} + \left(p - \sum_{u=1}^{k}w_{\sigma_x^w(u)}\right)\frac{y_{\sigma_y^w(k+1)}}{w_{\sigma_x^w(k+1)}}\\
	\geq& \sum_{u=1}^{k} y_{\sigma_y^w(u)} + \left(p - \sum_{u=1}^{k}w_{\sigma_y^w(u)}\right)\frac{y_{\sigma_y^w(k+1)}}{w_{\sigma_y^w(k+1)}}\\
	=& L^w_y(p),
	\end{align*}
	by the definition of the Lorenz-curve and the permutation~$\sigma_y^w$. This is a contradiction to the assumption. 
\end{proof}

\begin{lemma}\label{lem: dominates minimum}
	Let $x,y$ be two vectors in~$\R^n$ and weights~$w \in\R^n_{>0}$. Further, let $y$ $L^w$-dominate $x$. Then there exists an index~$j\in \{1,\dots, n\}$ such that
	\begin{align}
	y_j > x_j & \qquad \text{and } \qquad \frac{y_i}{w_i} \geq \min\left\{\frac{x_i}{w_i}, \frac{x_j}{w_j}\right\} \text{ for all } i\in \{1,\dots, n\}\label{eq: yi/wi geq min xiwi..}
	\end{align}
	holds true. 
\end{lemma}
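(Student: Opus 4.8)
The plan is to prove the lemma by choosing the index $j$ carefully and then deriving a contradiction from comparing the two $w$-Lorenz curves at a single well-chosen point. By Lemma~\ref{lem: Lorenz dominates one entry strictly bigger} the set $J:=\{k\in\{1,\dots,n\}:y_k>x_k\}$ is nonempty, and I would choose $j\in J$ to be an index minimizing $\frac{x_k}{w_k}$ over all $k\in J$; put $\tau:=\frac{x_j}{w_j}$. Then $y_j>x_j$ holds by construction, so it remains to verify $\frac{y_i}{w_i}\ge\min\{\frac{x_i}{w_i},\tau\}$ for every $i$, which after unwinding the minimum amounts to showing that every index $i$ with $y_i<x_i$ satisfies $\frac{y_i}{w_i}\ge\tau$. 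Already for $n=3$ one can exhibit an example in which an arbitrary index supplied by Lemma~\ref{lem: Lorenz dominates one entry strictly bigger} violates this, so selecting the ratio-minimizer over $J$ is essential.

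Assume the assertion fails and fix a witness $i_0$ with $y_{i_0}<x_{i_0}$ and $\rho:=\frac{y_{i_0}}{w_{i_0}}<\tau$. I would then work with the two lower level sets $C:=\{k:\frac{y_k}{w_k}\le\rho\}$ and $C_x:=\{k:\frac{x_k}{w_k}\le\rho\}$. A short computation based on the minimality of $\tau$ yields three facts. First, $C_x\subseteq C$: if $\frac{x_k}{w_k}\le\rho<\tau$, then $k\notin J$, hence $y_k\le x_k$ and therefore $\frac{y_k}{w_k}\le\rho$. Second, $i_0\in C\setminus C_x$, so that $C\supsetneq C_x$ and $\sum_{k\in C\setminus C_x}w_k\ge w_{i_0}>0$. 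Third, $y_k\le x_k$ for every $k\in C$: a $k\in C$ with $y_k>x_k$ would lie in $J$, forcing the impossible chain $\frac{x_k}{w_k}\ge\tau>\rho\ge\frac{y_k}{w_k}>\frac{x_k}{w_k}$.

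The contradiction then comes from evaluating both curves at the point $p:=\sum_{k\in C}w_k\le W$. Since $C$ consists exactly of the indices with the smallest $\frac{y_k}{w_k}$-ratios, $p$ is a node of the piecewise linear function $L^w_y$ and $L^w_y(p)=\sum_{k\in C}y_k$; splitting this sum into the contribution of $C_x$ and that of $C\setminus C_x$, bounding the former by the third fact and the latter using $y_k\le\rho w_k$, one obtains $L^w_y(p)\le\sum_{k\in C_x}x_k+\rho\sum_{k\in C\setminus C_x}w_k$. On the other hand, choosing a permutation $\sigma^w_x$ that lists $C_x$ first, the value $w(C_x):=\sum_{k\in C_x}w_k$ is a node of $L^w_x$ with $L^w_x(w(C_x))=\sum_{k\in C_x}x_k$, and on the interval $(w(C_x),W]$ every linear piece of $L^w_x$ has slope $\frac{x_{\sigma^w_x(l)}}{w_{\sigma^w_x(l)}}$ for some index $\sigma^w_x(l)\notin C_x$, hence slope strictly larger than $\rho$. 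Because $p>w(C_x)$ by the second fact, this yields $L^w_x(p)>\sum_{k\in C_x}x_k+\rho\,(p-w(C_x))=\sum_{k\in C_x}x_k+\rho\sum_{k\in C\setminus C_x}w_k$. Combining the two estimates gives $L^w_y(p)<L^w_x(p)$, contradicting that $y$ $L^w$-dominates $x$. Hence no witness $i_0$ exists, and $j$ has both required properties.

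The step I expect to be the main obstacle is the lower bound on $L^w_x(p)$: here one must use that the slopes of the $w$-Lorenz curve of $x$ are precisely the sorted ratios $\frac{x_k}{w_k}$, so that the strictly positive gap $p-w(C_x)$ — whose positivity is exactly what the existence of $i_0$ buys — gets converted into a strict inequality, which is what makes the comparison with $L^w_y(p)$ genuinely contradict $L^w$-domination rather than merely tie. By contrast, once the point $p$ and the index $j$ are fixed, the three bookkeeping facts and the splitting of the sums are routine.
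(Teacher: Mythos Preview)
Your proof is correct. Both your argument and the paper's share the same skeleton---pick an index $j\in J=\{k:y_k>x_k\}$, assume a violating index $i_0$ exists, and obtain a contradiction by comparing the two Lorenz curves at a single point---and in fact both end up with essentially the same $j$ (the index in $J$ with smallest ratio $x_k/w_k$). The technical execution differs, however. The paper first establishes the partial-sum inequality $\sum_{u\le k}x_{\sigma_x^w(u)}\le\sum_{u\le k}y_{\sigma_x^w(u)}$ for all $k$ and extracts $j$ from the first index where it is strict; it then compares the curves at a point of the form $\sum_{u<\bar{k}}w_{\sigma_x^w(u)}+\varepsilon$ with $\varepsilon=\min(w_{i_0},w_j)$. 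You instead define $j$ directly as the ratio-minimizer over $J$ and work with the two level sets $C=\{k:y_k/w_k\le\rho\}$ and $C_x=\{k:x_k/w_k\le\rho\}$, comparing at $p=w(C)$. Your route avoids the preliminary partial-sum lemma and the somewhat delicate $\varepsilon$-bookkeeping; the level-set formulation makes the strict inequality transparent, since the slopes of $L^w_x$ beyond $w(C_x)$ are all strictly above $\rho$ while $p-w(C_x)>0$ is forced by $i_0\in C\setminus C_x$. The paper's approach, on the other hand, packages the combinatorics into the single inequality~\eqref{eq: inequality Lorenz curve}, which is reusable elsewhere.
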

\begin{proof} 
	Since $x$ is $L^w$-dominated by~$y$, we get 
	\begin{align}
	\sum_{u=1}^k x_{\sigma_x^w(u)} & = L^w_x\left(\sum_{u=1}^k w_{\sigma_x^w(u)}\right) \notag \\
	& \leq L^w_y\left(\sum_{u=1}^k w_{\sigma_x^w(u)}\right) \notag \\
	& \leq \sum_{u=1}^k y_{\sigma_y^w(u)} + \left(\sum_{u=1}^{k}w_{\sigma_x^w(u)} - \sum_{u=1}^{k}w_{\sigma_y^w(u)}\right) \frac{y_{\sigma_y^w(k+1)}}{w_{\sigma_y^w(k+1)}} \notag \\
	& \leq \sum_{u=1}^k y_{\sigma_x^w(u)} + \left(\sum_{u=1}^{k}w_{\sigma_x^w(u)} - \sum_{u=1}^{k}w_{\sigma_x^w(u)}\right) \frac{y_{\sigma_y^w(k+1)}}{w_{\sigma_x^w(k+1)}} \notag \\
	& \leq \sum_{u=1}^ky_{\sigma_x^w(u)}\label{eq: inequality Lorenz curve}
	\end{align}
	for all~$k\in \{1,\dots, n\}$ by the definition of $\sigma^w_x$ and $\sigma^w_y$. Also we know by Lemma~\ref{lem: Lorenz dominates one entry strictly bigger} that at least one entry of~$y$ is strictly larger than the corresponding entry of~$x$. Thus, the inequality~\eqref{eq: inequality Lorenz curve} is strict for some~$k\in \{1,\dots, n\}$. Let~$\overline{k}$ be the smallest index such that~\eqref{eq: inequality Lorenz curve} is strict and, thus, we get
	\begin{align*}
	y_{\sigma_x^w(\overline{k})} &> x_{\sigma_x^w(\overline{k})}
	\intertext{and}
	y_{\sigma_x^w(k)} &= x_{\sigma_x^w(k)} \quad \text{ for all } k<\overline{k}.
	\end{align*}
	
	Now let $j\coloneqq \sigma_x^w(\overline{k})$ which implies $y_j > x_j$. Suppose for contradiction that there exists an index~$i\in \{1,\dots, n\}$ such that~\eqref{eq: yi/wi geq min xiwi..} is not fulfilled, i.e. $\frac{y_i}{w_i} < \min \left\{\frac{x_i}{w_i}, \frac{x_j}{w_j}\right\}$. Hence, $y_i < x_i$ and $\frac{y_i}{w_i} < \frac{x_i}{w_i}$ hold. Since the former inequality is true, the index~$\overline{u}$ with $\sigma_x(\overline{u}) = i$ is greater or equal to~$\overline{k}$. We define
	\begin{align}
	\varepsilon&\coloneqq\min\left(w_i, w_{\sigma_x(\overline{k})}\right), \\
	p&\coloneqq \sum_{u=1}^{\overline{k}-1} w_{\sigma_x^w(u)} + \varepsilon.
	\end{align}
	
	By evaluating the Lorenz curve~$L^w_y$ at point~$p$ we get 
	\begin{align*}
	L^w_y(p) &= L^w_y\left(\sum_{u=1}^{\overline{k} - 1} w_{\sigma_x^w(u)} + \varepsilon\right)\leq \sum_{u=1}^{\overline{k}-1} y_{\sigma_x^w(u)} + \varepsilon\frac{y_i}{w_i}. 
	\end{align*} 
	By the definition of~$\overline{k}$, this sum is equal to
	\begin{align*}
	\sum_{u=1}^{\overline{k}-1} x_{\sigma_x^w(u)} + \varepsilon\frac{y_i}{w_i}.
	\end{align*}
	Using that $\frac{y_i}{w_i} < \min\left\{\frac{x_i}{w_i}\frac{x_j}{w_j}\right\}$ and the definition of~$j$, we get
	\begin{align*}
	\sum_{u=1}^{\overline{k}-1} x_{\sigma_x^w(u)} + \varepsilon\frac{y_i}{w_i} &< \sum_{u=1}^{\overline{k}-1} x_{\sigma_x^w(u)} + \varepsilon\frac{x_j}{w_j} \\
	&= \sum_{u=1}^{\overline{k}-1} x_{\sigma_x^w(u)} + \varepsilon\frac{x_{\sigma_x^w(\overline{k})}}{w_{\sigma_x^w(\overline{k})}} \\
	&= L^w_x\left(\sum_{u=1}^{\overline{k}-1}w_{\sigma_x^w(u)} + \varepsilon\right) \\
	&= L^w_x(p),
	\end{align*}
	which is a contradiction. This settles the claim.
\end{proof}

Given the condition of Lemma~\ref{lem: dominates minimum}, we show in the next lemma that we can construct a lexicographically bigger vector.

\begin{lemma}\label{lem: lexicographically dominating}
	Let~$x,y$ be two vectors in~$\R^n$ and an index~$j\in \{1,\dots, n\}$ such that~$y_j>x_j$ and 
	\begin{align}
	y_i \geq\min\{x_i,x_j \} \text{ for all } i\in \{1,\dots, n\} \label{eq:yi greater than min}
	\end{align}
	holds true. Then $x$ is lexicographically dominated by the vector~$z\coloneqq\frac{x+y}{2}$.	
\end{lemma}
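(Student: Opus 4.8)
The plan is to work directly with the midpoint vector $z=\tfrac{x+y}{2}$ and the sorted (non-decreasing) rearrangements of $x$ and $z$, since these are the objects the leximin comparison in~\eqref{eq: obj func isv} actually inspects. First I would translate the hypotheses into pointwise statements about $z$: if $x_i\le x_j$ then $\min\{x_i,x_j\}=x_i$, so $y_i\ge x_i$ and therefore $z_i\ge x_i$; if $x_i>x_j$ then $\min\{x_i,x_j\}=x_j$, so $y_i\ge x_j$ and therefore $z_i\ge\tfrac{x_i+x_j}{2}>x_j$; and since $y_j>x_j$ strictly we get $z_j>x_j$. In words, entries of $x$ that are at most $x_j$ never decrease under $z$, entries above $x_j$ stay strictly above $x_j$, and the entry $j$ strictly increases. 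Write $x^{\uparrow},z^{\uparrow}$ for the vectors sorted non-decreasingly and put $p\coloneqq|\{i:x_i\le x_j\}|\ge 1$, so that $x^{\uparrow}_1\le\dots\le x^{\uparrow}_p\le x_j$ and, if $p<n$, also $x^{\uparrow}_{p+1}>x_j$.

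The first step is a weak-domination claim on the initial segment: $z^{\uparrow}_k\ge x^{\uparrow}_k$ for every $k\le p$. Fix such a $k$; then $x^{\uparrow}_k\le x_j$. If some index $i$ has $z_i<x^{\uparrow}_k$, then $z_i<x_j$, which by the pointwise facts forces $x_i\le x_j$ (otherwise $z_i>x_j$) and hence $z_i\ge x_i$; combining, $x_i\le z_i<x^{\uparrow}_k$, so $i$ belongs to $\{i:x_i<x^{\uparrow}_k\}$. Since the latter set has at most $k-1$ elements (the entries of $x$ strictly below its $k$-th smallest value), $z$ has at most $k-1$ entries below $x^{\uparrow}_k$, i.e.\ $z^{\uparrow}_k\ge x^{\uparrow}_k$.

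The second step locates a strict improvement inside this segment by counting the large entries. By the pointwise facts, each of the $n-p$ indices with $x_i>x_j$ satisfies $z_i>x_j$, and in addition $z_j>x_j$ while $j$ is not one of those indices; hence $z$ has at least $n-p+1$ entries strictly above $x_j$, so at most $p-1$ of its entries are $\le x_j$, which forces $z^{\uparrow}_p>x_j\ge x^{\uparrow}_p$. Thus $z^{\uparrow}$ and $x^{\uparrow}$ cannot coincide on all of positions $1,\dots,p$; let $k^{\circ}\le p$ be the least index at which they differ. By Step~1 we have $z^{\uparrow}_{k^{\circ}}\ge x^{\uparrow}_{k^{\circ}}$, hence in fact $z^{\uparrow}_{k^{\circ}}>x^{\uparrow}_{k^{\circ}}$, while $z^{\uparrow}_k=x^{\uparrow}_k$ for all $k<k^{\circ}$ by minimality together with Step~1. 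This is precisely the statement that $z$ lexicographically dominates $x$.

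The only genuinely fiddly part I anticipate is handling ties cleanly — several coordinates equal to $x_j$, or equal to the threshold $x^{\uparrow}_k$ — and being careful to count with a strict "$<$" in Step~1 but to sort the large entries with "$>$" in Step~2. The strict inequality $z_j>x_j$ (equivalently $y_j>x_j$) is exactly what upgrades the weak domination of Step~1 to a strict lexicographic gain, and it must be fed into the large-entry count of Step~2 rather than into Step~1.
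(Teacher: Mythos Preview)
Your proof is correct and follows essentially the same route as the paper: both arguments hinge on the two set inclusions $\{i:z_i<x^{\uparrow}_k\}\subseteq\{i:x_i<x^{\uparrow}_k\}$ for $k\le p$ (your Step~1, the paper's second displayed inclusion) and $\{i:z_i\le x_j\}\subsetneq\{i:x_i\le x_j\}$ (your Step~2, the paper's first inclusion, with $j$ witnessing strictness), which together give weak domination on the first $p$ sorted positions and a strict gain at position $p$. Your write-up is in fact cleaner than the paper's --- you separate the pointwise facts about $z$ up front and avoid the somewhat muddled $L_j,R_j$ case split --- but the underlying counting argument is identical.
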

\begin{proof}
	Let $k$ be the number of entries in $x$ that are less than or equal to $x_j$. Since the permutation~$\sigma_x$ sorts the entries of~$x$ in a non-decreasingly way, it holds that $x_{\sigma_x(k)} = x_j$. Notice that this does not imply $\sigma_x(k) = j$. We distinguish two cases. Either we consider an index~$i$ out of
	\begin{align*}
	L_j\coloneqq\{l: z_{\sigma_z(l)}\leq x_{\sigma_x(l)} = x_j \}
	\intertext{or out of} 
	R_j\coloneqq\{l: z_{\sigma_z(l)}> x_{\sigma_x(l)} = x_j \}.
	\end{align*}
	
	Given $i\in L_j$, we know that $z_{\sigma_z(i)}\leq x_j$ holds. Further, this implies that one of the following cases
	\begin{align}
	x_i < x_j \quad\text{ or}  \label{eq: xi less than xj}\\
	y_i < x_j \quad\text{ or} \label{eq: y less than x} \\
	x_i = y_i = x_j \label{eq: all entries are equal}
	\end{align}
	holds. In case~\eqref{eq: all entries are equal} and case~\eqref{eq: xi less than xj} it follows directly that $x_i \leq x_j$. 
	With~\eqref{eq:yi greater than min} and the fact that $\min\{x_i,x_j\} = x_i$ holds true, it follows also for case~\eqref{eq: y less than x} that $x_i \leq x_j$ holds true. With this, we obtain
	\begin{align*}
	\{i: z_i\leq x_j\} \subseteq \{i: x_i\leq x_j\}
	\end{align*}
	and since $z_j > \frac{x_j + x_j}{2} = x_j$ by definition, this inclusion is strict. This means, the index~$j$ is only contained in the latter set and we get an upper bound on the size of~$\{i: z_i\leq x_j\}$ by~$k$. Thus, $z$ has strictly less than $k$~entries smaller than~$x_j$ and, hence, also for a permutation~$\sigma$ that orders the entries of a vector non-decreasingly, it holds
	\begin{align}
	z_{\sigma_z(k)} > x_j = x_{\sigma_x(k)}. \label{eq: z greater than x for k}
	\end{align}
	
	In the same way, for $u\in\{1,\dots, k\}$ and each $i\in \{1,\dots, n\}$ with $z_i\leq x_{\sigma_x(u)}$ we obtain that $z_i < x_{\sigma_x(u)}$ implies  $x_i < x_{\sigma_x(u)}$.
	Thus, we get again an inclusion
	\begin{align}
	\{i: z_i < x_{\sigma_x(k)}\} \subseteq \{i: x_i < x_{\sigma_x(u)}\}
	\end{align}
	which is not necessarily strict. Since $x_{\sigma_x(u)}$ is the $u$-th smallest entry of~$x$, we get an upper bound on the size of the set~$\{i: z_i < x_{\sigma_x(k)}\}$ by~$u$. This implies 
	\begin{align}
	z_{\sigma_z(u)} \geq x_{\sigma_x(u)} \text{ for all } u\in\{1,\dots, k\}.\label{eq: z greater than x for u}
	\end{align}	
	With \eqref{eq: z greater than x for k} and \eqref{eq: z greater than x for u} it follows that $x$ is lexicographically dominated by~$z$.
	
	We do not have to consider case~\ref{eq: y less than x} since~$x$ is lexicographically dominated by $z$ in any way.  
\end{proof}

Considering the egalitarian allocation, i.e. the weighted egalitarian allocation with unit weights, we are able to obtain the next lemma.

\begin{lemma}\label{cor: egal alloc is lex max}
	Let $y\in\R^n$ be the egalitarian allocation. Then $y$ is the lexicographical maximal vector in the core.
\end{lemma}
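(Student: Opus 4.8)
The plan is to derive Lemma~\ref{cor: egal alloc is lex max} from Theorem~\ref{thm: Lorenz dominates core} together with Lemmas~\ref{lem: dominates minimum} and~\ref{lem: lexicographically dominating}, the key point being that $w$-Lorenz-domination with unit weights implies lexicographic domination of the non-decreasing rearrangements (which is exactly the order maximized in the ISV objective~\eqref{eq: obj func isv}), while the converse implication fails in general.

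First I would fix the weight vector $w=(1,\dots,1)$ and invoke Theorem~\ref{thm: Lorenz dominates core}: the egalitarian allocation $y$ is a core allocation and $L^w$-dominates every core allocation $z\neq y$. Fix such a $z$. By Lemma~\ref{lem: dominates minimum}, applied with the dominated vector $x:=z$, the dominating vector $y$, and unit weights, there is an index $j$ with $y_j>z_j$ and $y_i\geq\min\{z_i,z_j\}$ for all $i$. Lemma~\ref{lem: lexicographically dominating} (again with $x:=z$) then yields that $z$ is lexicographically dominated by $\tfrac{z+y}{2}$; since the core is convex, $\tfrac{z+y}{2}$ is itself a core allocation. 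Hence every core allocation different from $y$ is strictly beaten in the lexicographic order by another core allocation. (Equivalently, and more directly: if $a_1\le\dots\le a_n$ and $b_1\le\dots\le b_n$ are the non-decreasing rearrangements of $y$ and $z$, then $L^w$-domination gives $\sum_{i=1}^k a_i\ge\sum_{i=1}^k b_i$ for every $k$, with equality at $k=n$ since both sums equal $v(N)$; at the first index $m$ where the rearrangements differ, which exists because otherwise the two Lorenz curves would coincide, this forces $a_m>b_m$, i.e.\ $y$ lexicographically dominates $z$.)

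Finally I would conclude: a lexicographic maximum over the core exists, since the core is compact and the partial sums of the sorted payoff vector are continuous; by the previous step no core allocation other than $y$ can attain it, so it must be $y$, and it is the unique one because Theorem~\ref{thm: Lorenz dominates core} provides \emph{strict} Lorenz-domination over all other core allocations. The main obstacle I expect is purely bookkeeping: keeping the direction of the two domination notions straight, checking that the relevant order is the one on non-decreasing rearrangements appearing in~\eqref{eq: obj func isv}, and making sure the convexity hypothesis that powers both Theorem~\ref{thm: Lorenz dominates core} and the "$\tfrac{z+y}{2}$ stays in the core" step is in force; the short existence/uniqueness argument for the lexicographic maximum over the core is the only other place that needs a little care.
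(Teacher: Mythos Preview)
Your proposal is correct and follows essentially the same route as the paper: invoke Theorem~\ref{thm: Lorenz dominates core} to get that $y$ $L^w$-dominates any other core element, apply Lemma~\ref{lem: dominates minimum} and then Lemma~\ref{lem: lexicographically dominating} to produce the midpoint $\tfrac{z+y}{2}$, and use convexity of the core to derive a contradiction with lexicographic maximality. You are in fact more careful than the paper on two points it leaves implicit---that the midpoint lies in the core and that a lexicographic maximum exists by compactness---and your parenthetical ``partial sums'' argument is a pleasant alternative shortcut.
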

\begin{proof}
	Suppose not. Then there exists a vector~$x\in\R^n$ with $y\neq x$ which is the lexicographical maximal vector in the core. By Lemma~\ref{lem: dominates minimum}, there exists an index~$j$ such that $y_j>_j$ and $y_i\geq \min\{x_i, x_j\}$ for all $i=1,\dots,n$. Therefore, the condition of Lemma~\ref{lem: lexicographically dominating} is fulfilled for the vectors~$x,y$. Thus, by said lemma, it follows that $z\coloneqq \frac{x+y}{2}$ is lexicographical bigger than $x$ --- a contradiction. 
\end{proof}

Together with the results above we are able to show that the ISV-payments coincide with the weighted egalitarian allocation if we take the Vickrey payments as weights. We summarize this in the following theorem. 

\begin{theorem}
	Let~$(N,v)$ be a convex game and let~$w\in\R_{>0}^{|N|}$ denote weights for the participants. Then the ISV-payments are equal to the weighted egalitarian allocation with the weights equal to the Vickrey payments. 
\end{theorem}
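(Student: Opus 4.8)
The plan is to identify both quantities with one and the same object: the lexicographic maximum, in the sense of~\eqref{eq: obj func isv}, of the vector $(p_{i}/vp_{i})_{i\in N}$ ranging over the core of the game. I write $vp=(vp_{i})_{i\in N}$ for the vector of Vickrey payments and use it as the weight vector (the $w$ in the statement plays no role beyond this choice); throughout I assume, as is implicit in the setting, that $v(\{i\})\geq 0$ for all $i$ — so that every $vp_{i}\geq 0$ and the scaling factors really range over $[0,1]$ — and that the budget is $P=v(N)$.

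\emph{Step 1 — a weighted form of Lemma~\ref{cor: egal alloc is lex max}.} I would first show that for a convex game and positive weights $w$ the $w$-weighted egalitarian allocation $y^{*}$ is the lexicographic maximum of $(y_{i}/w_{i})_{i\in N}$ over the core, by copying the proof of Lemma~\ref{cor: egal alloc is lex max}. Assume a core vector $x\neq y^{*}$ were this maximum. By Theorem~\ref{thm: Lorenz dominates core}, $y^{*}$ $L^{w}$-dominates $x$, so Lemma~\ref{lem: dominates minimum} yields an index $j$ with $y^{*}_{j}>x_{j}$ and $y^{*}_{i}/w_{i}\geq\min\{x_{i}/w_{i},\,x_{j}/w_{j}\}$ for all $i$. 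For the entry-wise scaled vectors $\tilde{x}=(x_{i}/w_{i})_{i}$ and $\tilde{y}=(y^{*}_{i}/w_{i})_{i}$ the hypotheses of Lemma~\ref{lem: lexicographically dominating} hold (with the same index $j$), so $\tilde{x}$ is lexicographically dominated by $(\tilde{x}+\tilde{y})/2=(z_{i}/w_{i})_{i}$, where $z=(x+y^{*})/2$ lies in the core by convexity — contradicting the choice of $x$. Taking $w=vp$ gives: $y^{*}$ is the lexicographic maximum of $(y_{i}/vp_{i})_{i\in N}$ over the core.

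\emph{Step 2 — the ISV program as a lexicographic maximization over the core.} Every core allocation $p$ is feasible for~\eqref{eq: obj func isv}: by Theorem~\ref{thm: vickrey core}, $p_{i}\leq vp_{i}$, by individual rationality $p_{i}\geq v(\{i\})\geq 0$, so $\alpha_{i}:=p_{i}/vp_{i}\in[0,1]$, and $\sum_{i}p_{i}=v(N)=P$. It remains to see that the ISV payment vector $p^{*}$ itself lies in the core. I would argue via two consequences of supermodularity. Summing the core inequalities for the $(|N|-1)$-element coalitions over any fixed core point gives $\sum_{i\in N}vp_{i}\geq v(N)$, so the constraint $\sum_{i}\alpha_{i}vp_{i}=v(N)$ together with $\alpha_{i}\leq 1$ forces the lexicographically maximal feasible payoff to be the proportional one, $p^{*}=\frac{v(N)}{\sum_{j}vp_{j}}\,vp$. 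Moreover the supermodular estimates $\sum_{j\notin S}vp_{j}\leq v(N)-v(S)$ and $\sum_{i\in S}vp_{i}\geq v(S)$, valid for every coalition $S$ (here one uses that a convex game with non-negative singleton values is monotone), give $\frac{v(S)}{\sum_{i\in S}vp_{i}}\leq\frac{v(N)}{\sum_{j}vp_{j}}$, which is precisely $p^{*}(S)\geq v(S)$; hence $p^{*}$ is in the core. Since the ISV-feasible set contains the core and $p^{*}$ lies in the core, $p^{*}$ is also the lexicographic maximum of $(y_{i}/vp_{i})_{i\in N}$ over the core.

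\emph{Step 3 — conclusion, and the main difficulty.} The core of a convex game is a non-empty compact convex polytope, so the lexicographic maximum of the affine image $(y_{i}/vp_{i})_{i\in N}$ over it is unique; combined with Steps~1 and~2 this gives $p^{*}=y^{*}$, proving that the ISV payments equal the weighted egalitarian allocation with weights equal to the Vickrey payments. The step I expect to cost the most work is showing $p^{*}\in$ core, i.e.\ that \emph{individual} scaling restores coalitional rationality for convex games; the reduction to the proportional point together with the two supermodular inequalities above is the route I would take, being careful about coordinates with $vp_{i}=0$ (where the payment is $0$ in either description) and about the standing assumption $v(\{i\})\geq 0$, which is what makes $vp_{i}\geq 0$ and $[0,1]$ the right domain for the scaling factors.
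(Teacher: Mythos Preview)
Your overall strategy coincides with the paper's: show that the $vp$-weighted egalitarian allocation is the lexicographic maximum of $(p_i/vp_i)_i$ over the core (your Step~1 is exactly the weighted version of Lemma~\ref{cor: egal alloc is lex max}, and the paper runs the same contradiction via Lemmas~\ref{lem: dominates minimum} and~\ref{lem: lexicographically dominating}), then show that the ISV vector is the same lexicographic maximum, and conclude by uniqueness. The paper, however, does not prove that the ISV vector lies in the core; it simply asserts ``$x$ lies in the core by construction'' and moves on. You are more careful here --- you notice that with the constraints as written the ISV program has only the box $[0,1]^N$ and the budget equality, so its lexicographic optimum is the \emph{equally} scaled vector $p^*=\tfrac{v(N)}{\sum_j vp_j}\,vp$, and you then try to prove $p^*$ is a core element.

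The gap is in that last step. Your ``supermodular estimate'' $\sum_{j\notin S} vp_j \leq v(N)-v(S)$ points the wrong way for convex games. Since $vp_j=v(N)-v(N\setminus\{j\})$ is the \emph{largest} marginal contribution of $j$, telescoping $v(N)-v(S)$ along any enumeration of $N\setminus S$ gives
\[
v(N)-v(S)\;=\;\sum_{j\notin S}\bigl(v(S\cup\{\text{predecessors},j\})-v(S\cup\{\text{predecessors}\})\bigr)\;\leq\;\sum_{j\notin S} vp_j,
\]
i.e.\ $\sum_{j\notin S} vp_j \geq v(N)-v(S)$, not $\leq$. The paper's own running example already violates your inequality: with $vp=(5,7,2)$, $v(N)=7$ and $S=\{1\}$ one gets $\sum_{j\notin S}vp_j=9>7=v(N)-v(S)$. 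With the inequality reversed your chain breaks: from $\sum_{i\in S}vp_i\geq v(S)$ and $\sum_{j\notin S}vp_j\geq v(N)-v(S)$ you cannot deduce $\tfrac{v(S)}{\sum_{i\in S}vp_i}\leq\tfrac{v(N)}{\sum_j vp_j}$ (both bounds now push the wrong way). The conclusion $p^*\in\text{core}$ may still be true for convex games --- indeed it is equivalent to $N$ being the maximizer of $v(S)/\sum_{i\in S}vp_i$ in Algorithm~\ref{alg: egalitarian allocation}, which is what the theorem ultimately asserts --- but it needs a different argument than the one you give.
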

\begin{proof}
	Let~$x$ denote the ISV-payment vector and $y$ the egalitarian allocation vector with $x\neq y$. Note that the vector~$x$ lies in the core by construction and the egalitarian allocation lies in the core by Theorem~\ref{thm: Lorenz dominates core}. Further, also by Theorem~\ref{thm: Lorenz dominates core}, the allocation~$y$ $L^w$-dominates~$x$. 
	
	By Lemma~\ref{lem: dominates minimum} we know that there exists an index~$j$ with $y_j > x_j$ and $\frac{y_i}{w_i}\geq \min\{\frac{x_i}{w_i}, \frac{x_j}{w_j}\}$ for all $i=1,\dots, n$. Thus, we can apply Lemma~\ref{lem: lexicographically dominating} to the vectors~$\frac{y}{w}$ and $\frac{x}{w}$ and get that the vector~$z=\frac{\frac{x+y}{2}}{w}$ lexicographically dominates the vector~$\frac{x}{w}$. Since the core is a convex set, also $z'\coloneqq \frac{x+y}{2}$ lies in the core. Since the weight~$w$ is given as the Vickrey payments, $\frac{x}{w}$ is by definition the lexicographic maximal solution to Problem~\eqref{eq: obj func isv}. Since the Vickrey payments is the highest payoff to a participant of all elements in the core, the vector $\frac{z'}{w}$ is also a solution to Problem~\eqref{eq: obj func isv}. This is a contradiction to $x$ being the ISV-payment vector.
\end{proof}

We conclude the section on the connection between the ISV payments and the weighted egalitarian allocation with the following corollaries which follow immediately from the result above. 
\begin{corollary}
	The ISV-payments are a unique core element (cf. \cite{dutta1989concept, koster1999weighted}). $\hfill \Box$
\end{corollary}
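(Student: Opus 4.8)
The plan is to obtain this statement as a one-step consequence of the theorem just proved, which identifies the ISV payment vector with the weighted egalitarian allocation $EA^w(L(N))$ for the weight vector $w=(vp_i)_{i\in N}$. There are exactly two things to check: that this object is unique, and that it lies in the core.

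For uniqueness I would proceed as follows. First I would observe that $w=(vp_i)_{i\in N}$ is an admissible, strictly positive weight vector: for a convex game the core is non-empty, and by Theorem~\ref{thm: vickrey core} each $vp_i$ dominates the $i$-th coordinate of every core element, so under the standing non-degeneracy assumption the Vickrey payments are positive. Then Theorem~\ref{thm: Lorenz dominates core}, which guarantees a \emph{unique} weighted egalitarian allocation for a convex game and any positive weights, applies with this particular $w$. Combining it with the preceding theorem, the ISV payment vector equals that unique allocation and is therefore itself unique.

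Membership in the core then comes for free, and in two independent ways, either of which suffices. On the one hand, the ISV vector lies in the core directly by its construction: the equality constraint $\sum_{i\in N}vp_i^{ind}=P$ in program~\eqref{eq: obj func isv} (with $P=v(N)$) is budget balance, the bounds $\alpha_i\in[0,1]$ together with Theorem~\ref{thm: vickrey core} give individual rationality, and in the combinatorial-exchange setting of Ackermann et al.\ one also has coalitional rationality. On the other hand, Theorem~\ref{thm: Lorenz dominates core} already states that the weighted egalitarian allocation of a convex game is a core allocation, so the identification supplied by the preceding theorem places the ISV vector in the core once more.

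I do not expect a genuine obstacle here; the corollary is a direct reading of the theorem above in light of Theorem~\ref{thm: Lorenz dominates core}. The only point that deserves a sentence of care — and the closest thing to a subtlety — is the implicit requirement that the Vickrey payments be strictly positive, so that they really constitute a valid weight vector: a coordinate with $vp_i=0$ would force $vp_i^{ind}=0$ and would have to be removed beforehand (or absorbed by passing to the subgame on the remaining participants), exactly as is already implicit in the hypotheses of the preceding theorem.
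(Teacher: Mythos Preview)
Your proposal is correct and matches the paper's approach exactly: the paper states this corollary as an immediate consequence of the preceding theorem together with the Dutta--Ray/Koster uniqueness-and-core result (Theorem~\ref{thm: Lorenz dominates core}), with no further argument given. Your added remark about needing $vp_i>0$ for the weights to be admissible is a fair caveat that the paper leaves implicit.
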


\begin{corollary}
	If the Vickrey payments for each participant can be computed efficiently, also the corresponding weighted egalitarian solution can be computed in polynomial time by iteratively solving at most $|N|$ linear programs (cf. \cite{ewe2011}). $\hfill \Box$
\end{corollary}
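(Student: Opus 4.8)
The plan is to use the equivalence from the preceding theorem to replace the computation of the weighted egalitarian solution by solving the lexicographic maximin program~\eqref{eq: obj func isv}, and then to solve that program with the standard iterated linear-programming scheme of~\cite{ewe2011}. First I would handle the preprocessing, namely producing the weights. The weights are exactly the Vickrey payments $w_i = vp_i = v(N) - v(N\setminus\{i\})$, so assembling $w$ amounts to evaluating the characteristic function on the grand coalition $N$ and on each $N\setminus\{i\}$, $i\in N$, that is, on $|N|+1$ coalitions. By hypothesis every such evaluation is efficient, hence $w$ --- and thereby all coefficients occurring in~\eqref{eq: obj func isv} --- is available in polynomial time and with polynomial encoding length. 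By the preceding theorem the weighted egalitarian solution for these weights is the ISV-payment vector, so it suffices to compute the optimizer of~\eqref{eq: obj func isv} with at most $|N|$ linear programs.

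Next I would expose the lexicographic structure. The feasible region of~\eqref{eq: obj func isv} is the polytope $\mathcal{F} = \{\alpha \in [0,1]^{|N|} : \sum_{i\in N}\alpha_i\, vp_i = P\}$, and the objective $\max^{\prec}$ selects the $\alpha\in\mathcal{F}$ whose entries, sorted nondecreasingly, are lexicographically largest; this is a leximin optimum over a polyhedron, exactly the setting of~\cite{ewe2011}. Following that scheme I would maintain a set $F$ of free indices, initialized to $N$, and in round $r$ solve the linear program
\[
\max\; t \quad\text{s.t.}\quad \alpha_i \ge t\ (i\in F),\quad \sum_{i\in N}\alpha_i\, vp_i = P,\quad \alpha\in[0,1]^{|N|},\quad \alpha_i \text{ fixed for } i\notin F
\]
obtaining the optimal threshold $t_r^{*}$.

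Then I would determine the coordinates that are pinned at the threshold. From an optimal primal--dual solution of the round-$r$ program one reads off the \emph{blocked} free indices, i.e.\ those equal to $t_r^{*}$ in every optimizer (for instance the indices whose lower-bound constraint $\alpha_i\ge t$ carries a positive multiplier in a strictly complementary solution); these are fixed at $t_r^{*}$ and removed from $F$, so each round costs a single linear program. The key claim --- to be shown directly or taken from~\cite{ewe2011} --- is that at least one free index is blocked in every round: otherwise all free coordinates could be raised simultaneously while staying feasible, contradicting the optimality of $t_r^{*}$, and it is precisely the budget-balance equality $\sum_{i\in N}\alpha_i\, vp_i = P$ with its strictly positive coefficients that forces such a tight coordinate. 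Hence $|F|$ strictly decreases each round, so the procedure halts after at most $|N|$ rounds, that is, after at most $|N|$ linear programs, and returns the unique leximin optimizer. As each program has $O(|N|)$ variables and constraints with the coefficients computed in the preprocessing step, each is solvable in polynomial time, and the whole computation is polynomial.

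I expect the main obstacle to be the termination argument, that each round blocks at least one free coordinate, since this is what yields the bound of $|N|$ linear programs; the remaining parts are either bookkeeping or direct appeals to the equivalence theorem and to~\cite{ewe2011}. My intended argument is geometric: at the leximin optimum of a round the active equality $\sum_{i\in N}\alpha_i\, vp_i = P$ together with $vp_i > 0$ prevents all free coordinates from being increased at once, so at least one of them must sit exactly at the current threshold $t_r^{*}$ and can no longer be improved.
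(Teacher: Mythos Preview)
Your proposal is correct and follows exactly the route the paper intends: the corollary is stated in the paper without proof (only the reference to \cite{ewe2011} and the preceding equivalence theorem), and you have correctly unpacked that reference by reducing the weighted egalitarian solution to the ISV program~\eqref{eq: obj func isv} via the theorem and then solving the resulting leximin problem with the standard iterated-LP scheme, together with the usual convex-combination argument that each round fixes at least one coordinate.
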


\section{Conclusion}

In this paper we exhibit a connection between the ISV payments and the WEA. We gave a short recall of the Vickrey payments as well as the EA and presented introductory examples. The main result here is the connection between the ISV payments and the WEA, where the weights are given by the Vickrey payments. This allows to conclude the uniqueness of the ISV payments as well as a polynominal time algorithm for computing the WEA with said weights. 

\nocite{graef2019}
\bibliographystyle{plain}
\bibliography{references}

\begin{thebibliography}{10}

\bibitem{ackermann2014modeling}
Heiner Ackermann, Hendrik Ewe, Karl-Heinz K{\"u}fer, and Michael Schr{\"o}der.
\newblock Modeling profit sharing in combinatorial exchanges by network flows.
\newblock {\em Annals of Operations Research}, 222(1):5--28, 2014.

\bibitem{arin1997egalitarianism}
J.~Arin and E.~Inarra.
\newblock {Consistency and Egalitarianism: The Egalitarian Set}.
\newblock Technical Report 163, ASSET (Association of Southern European
  Economic Theorists), 1997.

\bibitem{ausubel2002ascending}
Lawrence~M. Ausubel and Paul~R. Milgrom.
\newblock Ascending auctions with package bidding.
\newblock {\em Advances in Theoretical Economics}, 1(1), 2002.

\bibitem{clarke1971multipart}
Edward~H. Clarke.
\newblock Multipart pricing of public goods.
\newblock {\em Public choice}, pages 17--33, 1971.

\bibitem{dutta1989concept}
Bhaskar Dutta and Debraj Ray.
\newblock A concept of egalitarianism under participation constraints.
\newblock {\em Econometrica: Journal of the Econometric Society}, pages
  615--635, 1989.

\bibitem{ewe2011}
Hendrik Ewe.
\newblock Combinatorial exchanges in freight logistics.
\newblock {\em TU Kaiserslautern, Dissertation}, 2011.

\bibitem{graef2019}
Niklas Graef.
\newblock Approximative cores in game theory.
\newblock {\em TU Kaiserslautern, Master Thesis}, 2019.

\bibitem{groves1973incentives}
Theodore Groves.
\newblock Incentives in teams.
\newblock {\em Econometrica: Journal of the Econometric Society}, pages
  617--631, 1973.

\bibitem{koster1999weighted}
Maurice Koster.
\newblock Weighted constrained egalitarianism in tu-games.
\newblock Technical report, 1999.

\bibitem{parkes2001achieving}
David~C. Parkes, Jayant~R. Kalagnanam, and Marta Eso.
\newblock Achieving budget-balance with vickrey-based payment schemes in
  exchanges.
\newblock 2001.

\bibitem{vickrey1961counterspeculation}
William Vickrey.
\newblock Counterspeculation, auctions, and competitive sealed tenders.
\newblock {\em The Journal of finance}, 16(1):8--37, 1961.

\end{thebibliography}

\newpage
\noindent
Till Heller\\
Department of Optimization\\
Fraunhofer ITWM, Kaiserslautern\\
Germany\\
ORCiD: 0000-0002-8227-9353\\

Niklas Gräf\\
Sven O. Krumke\\
Optimization Research Group, Department of Mathematics\\
Technische Universit\"at Kaiserslautern, Kaiserslautern\\
Germany\\
\end{document}